\def\BibTeX{{\rm B\kern-.05em{\sc i\kern-.025em b}\kern-.08em
    T\kern-.1667em\lower.7ex\hbox{E}\kern-.125emX}}
\DeclareMathOperator{\mean}{\mathbb{E}}
\newcommand{\be}{\begin{equation}}
\newcommand{\ee}{\end{equation}}
\newcommand{\prob}[1]{\hbox{Pr}\left\{#1\right\}}
\newcommand{\sset}[1]{\left\{#1\right\}}
\newcommand{\pd}{p_{\Delta}}
\newtheorem{thm}{Theorem}
\newtheorem{lemma}{Lemma}
\newtheorem{cor}{Corollary}
\begin{document}

\title{Matching Noisy Keys for Obfuscation\\
}

\author{\IEEEauthorblockN{Charlie Dickens}
\IEEEauthorblockA{\textit{Yahoo!} \\
charlie.dickens@yahooinc.com}
\and
\IEEEauthorblockN{Eric Bax}
\IEEEauthorblockA{\textit{Yahoo!} \\
ebax@yahooinc.com}
}

\maketitle

\begin{abstract}
Data sketching has emerged as a key infrastructure for large-scale data
analysis on streaming and distributed data.
Merging sketches enables efficient estimation of cardinalities and frequency histograms over distributed data.
However, merging sketches can require that each sketch stores hash codes for identifiers in different data sets or partitions, in order to perform effective matching. This can reveal identifiers during merging or across different data set or partition owners. 
This paper presents a framework to use noisy hash codes, with the noise level selected to obfuscate identifiers while allowing matching, with high probability.
We give probabilistic error bounds on simultaneous obfuscation and matching, concluding that this
is a viable approach.
\end{abstract}

\begin{IEEEkeywords}
data sketches, big data, privacy, information theory
\end{IEEEkeywords}

\section{Introduction}

Large data sets are now used across a variety of domains, including biosciences, language processing, and online advertising. Despite the variety of applications, data analysis at scale often requires answering a common set of similar 
queries for different data sets \cite{goyal2011approximate,goyal2012sketch,rowe2019levee,kockan2020sketching}.
For example, estimating the number of unique identifiers in a dataset is called the \emph{count distinct} problem. Another canonical problem is the \emph{frequency estimation problem} -- estimating the distribution of item frequencies over items in a data set. For small data sets, these queries are simple to evaluate. 

However, streaming and/or distributed data sets are much more challenging. For them, there are a range of useful algorithms, known as \emph{sketching} algorithms or \emph{data sketches}, that make a single pass through the data, in parallel for distributed data sets, maintaining a small data structure in memory, called a \emph{sketch}. Sketches from multiple data sets can be merged to form a single sketch for the combined data set, and sketches maintain data sufficient to compute estimates that are accurate, with high confidence. Such mergeable sketches have seen widespread use in industry, for example in Google's BigQuery engine \cite{bigquery-hll} and in the open-source Apache DataSketches library \cite{asf-datasketches}. After processing, a merged sketch can be stored independently of the sketches from which it was built, and it can be used to answer multiple queries about the entire data set.

Conceptually, traditional mergeable sketches operate in a paradigm where the entity sketching the data is the same as the entity merging the sketches. However, with changing privacy regulations \cite{cummings2018role}, this assumption may be invalid. For example, a multinational company may own data warehouses in jurisdictions with differing privacy regulations which could prevent data-sharing across geographies even within the same organization. Rather than sending raw data, which could contravene regulation (and is slow at scale), organizations can send sketches to efficiently answer queries over data from multiple locations, if the data sketches can be shown to preserve privacy. More generally, entities holding data may differ from the entity wishing to answer a query. For example, an online advertising broker that runs campaigns across different vendors needs to estimate statistics for advertising campaigns over multiple vendors, yet the vendors want to avoid revealing sensitive statistics or data to the broker or other vendors. As a solution, vendors can send data sketches with sufficient noise to obfuscate each vendor's contribution to query results over the combined data. This has prompted novel multiparty computation and communication models for aggregating statistics, as presented in \cite{ghazi2019power,corrigan2017prio,ghazi2022multiparty}.

One challenge in developing such combined estimates is that merging many sketches relies on a key feature of hashing: that the same input always maps to the same output. 
Some sketches hash identifiers to produce hash codes, then transmit hash code-value pairs as the sketch, where the value is a count or some other number, for example Bottom-$k$--type sketches 
\cite{bar2002counting,cohen2007summarizing,dasgupta2015framework}. 
But using the hash codes as keys to compare allows parties with the hash function to hash identifiers and compare them to the hash codes transmitted in data sketches. If they match, then the party can infer, with high probability, that the identifier was contained in the data that generated the sketch. 
We address the issue of key obfuscation in this paper which is necessary in sketches that rely on key matching in the merge step such as 
\cite{ghazi2022multiparty,dasgupta2015framework}. 
Data privacy research more often addresses obfuscation of the values \cite{dwork2006differential,dwork2014algorithmic} but both are 
important in practice.

Our work addresses how to merge based on keys among \emph{untrusted parties}, to understand how sketches with non-matching hashes can be merged. In particular, we assume that entities sketching the data and merging the sketches do not trust each other. This invalidates the approach of \cite{dickens2022order} who assume that the hash function seed is a secret shared among those parties, and do not add noise to the hash codes. While we use noise to achieve obfuscation, another option is to encrypt the sketch prior to merging, but this is time- and resource-intensive \cite{ghazi2022multiparty}. Our contribution is to design a noisy hashing scheme that permits matching among noisy hashes and only has a small, controllable probability of revealing any true hash code shared by two non-trusting entities.

We achieve this by using a hash function with a seed shared among the parties, but then having each party flip hash code bits uniformly at random with some probability to add noise to each hash code. In general, distinct items tend to have hash codes that disagree in about half their bits. We show that, given a sufficient hash code bit-length and an appropriate probability of flipping each bit when adding noise, all hash codes that did not originally match are likely to maintain sufficient Hamming distance as noisy keys that considering clusters of close noisy keys as matching sets is likely to have few or no matching errors. Also given a sufficient hash code bit-length and bit-flipping probability, we show that there is sufficient noise that the center of a cluster of noisy keys is unlikely to match the hash code for the identifier represented by the noisy keys. Finally, we complement the theoretical analysis with a suite of characterizations showing how the performance varies depending on practical considerations such as the noise probability over bits in the hash code and the hash code length.

\subsection{Related Work.}
For distinct counting, there is a variety of work focusing on obfuscation and adding noise to obtain weak privacy \cite{von2019rrtxfm,tschorsch2013algorithm}. If sketches do \emph{not} have added noise and use only a single, public hash function, then a series of merge operations can identify the presence of an identifier with high confidence, which is clearly undesirable from a privacy perspective \cite{desfontaines2019cardinality}. If the hash function seed can be kept secret and cryptographic hash functions are used, then a wide class of practical sketches can be published under a strong differential privacy guarantee \cite{dickens2022order,dwork2006differential,dwork2014algorithmic}. When secretly seeded private noise is combined with a public hash function strong differential privacy guarantees are possible in some cases. These methods apply when sketch contributors and the merger all trust one another. 

Applying noise by using randomized response \cite{warner1965randomized} bit-flipping permits merging for the class of sketches \cite{hehir2023sketch} that store only bitmaps; including the Flajolet-Martin sketch \cite{flajolet1985probabilistic}. Other types of mergeable sketches that benefit from strong differential privacy guarantees include linear sketches for 
frequency estimation by adding noise to entries in the sketch \cite{zhao2022differentially, pagh2022improved}. 
This differs from simply adding noise to the estimator, as seen in \cite{mir2011pan,ghazi2019power,melis2015efficient}, which is not sufficient in the 
setting where merging sketches is required because merging requires access to the sketch data structure.

Using a single public hash function is not a serious problem if sketches keep only a single bit \cite{hehir2023sketch} or maintain additive increments in each bucket \cite{zhao2022differentially,pagh2022improved}, in which buckets can only increase in value by at most one based on an update from a new or previously seen identifier hashing to that bucket. This property, known as \emph{low sensitivity}, enables certain noise mechanisms to be applied to the sketches prior to merging.

In addition to low sensitivity, the bitmap-style sketches of \cite{hehir2023sketch} and frequency estimator sketches of \cite{zhao2022differentially,pagh2022improved} benefit from another property: their performance is not degraded by multiple identifiers hashing to the same bucket. In \cite{hehir2023sketch}, a bit is set from $0$ to $1$ if the corresponding bucket is occupied by at least one item in the stream. The estimator needs the total number of set bits and, crucially, does not depend on how many items have caused a particular bit to be set. Similarly, the counting arrays in \cite{zhao2022differentially,pagh2022improved} rely on the Count(Min) arrays of \cite{charikar2002finding,cormode2005improved}. These arrays have an independent hash function for each row to determine which rows in the array to increment. Even if distinct items land in the same bucket in one row of the array, they are unlikely to do so in further rows, enabling the estimate to compensate for them easily. So keys are not required in the data sketch. 

A mergeable sketch to estimate both frequency histograms\footnote{Note that frequency \emph{histogram} and frequency \emph{estimation} differ in the sketching literature.
The latter asks us to estimate the frequency of any identifier in the data stream while the former asks us to estimate the number of identifiers with a given count in the stream.} and distinct count in small space has been proposed by \cite{ghazi2022multiparty}.
The sketch maintains a one-dimensional array of counters whose buckets are incremented by all items landing in that location and a hash function that places exponentially more mass on low-index buckets. These buckets must be resolved from those that have only a single identifier land there for later estimation, unlike in the sketches listed previously. One issue with this approach is that low-indexed buckets with a high probability of being occupied have artificially large counts because multiple items are hashed to them. The second issue is that their collision-resolution uses homomorphic encryption to generate the keys that indicate which items are in each bucket. That requires a further hashing step per update and suffers from the computational overhead of encrypted arithmetic.

\section{Notation and Goals}
Suppose we have a set of data sources, and each has a set of values. 
For each value, the data source applies a hash function to get a hash code, then adds noise to the hash code by randomly selecting whether to flip the bit (change its value), independently for each bit in the hash code. 
The result is a noisy key.
Hash codes are generated by a public hash function
accessible to and computable by all sources.
The noisy key is generated using independent, private randomness for every source.

Next, all data sources transmit their noisy keys to a merge process. The merge process clusters the noisy keys into clusters having zero or one noisy keys from each data source. Informally, one goal is for the merge process to place all noisy keys resulting from the same value into a cluster with no other noisy keys, for each distinct value. 
If only one data source has a value, then the goal is to place its noisy key into a ``singleton cluster'' with only that noisy key. Another goal is for it to be difficult to infer the hash code for any noisy key, even if each data source supplies a noisy key for the same value. 

Before formalizing those goals, we introduce some notation. For any pair of values $v_a$ and $v_b$ from different data sources, let $a$ and $b$ be the hash codes, and let $a'$ and $b'$ be the noisy keys. All data sources use the same public hash function, so if $v_a = v_b$ then $a = b$. 
Also, assume that if $v_a \not= v_b$, then each bit position in $a$ and $b$ has probability (over hash functions) one half of having equal bit values. Our probabilities are over the choice of hash function as well as over random bit flips.

We use $d(a, b)$ and $d(a',b')$ to represent Hamming distance -- the number of bit positions with unequal bit values. Informally, we say that $d(a, b)$ is the number of disagreements between sequences $a$ and $b$. We refer to the number of bit positions that have equal bit values as the number of agreements. For example, $00011$ and $00110$ has Hamming distance 2, because it has two disagreements (in the third and fifth positions). 

Let $(a', b')$ be a pair of noisy keys with $a'$ from a different data source than $b'$. Let $t$ be a matching threshold, and declare $(a', b')$ a match if $d(a', b') < t$ and unmatched otherwise. Define two types of events we wish to avoid or minimize:
\begin{itemize}
\item $M$: $d(a', b') < t$ and $v_a \not= v_b$ -- noisy keys match for unequal values.
\item $U$: $d(a', b') \geq t$ and $v_a = v_b$ -- noisy keys for equal values are unmatched.
\end{itemize}

Given a set of noisy keys from the same value and the hash code for the value, define the median key to be bit sequence with each position's bit value equal to the majority bit value in that position over the set of noisy keys, with a tie broken in favor of the value in that position in the hash code. 
Specifically, if $a_i' = x$ and there are an equal number of bits
$x$ and $1-x$ collected, then the value $x$ is reported as the 
median in position $i$.
Obfuscation fails if the median key is the same as the hash code. For each noisy key, define event $R$ to be that the set of noisy keys that shares its hash code has median key the same as that hash code.

In the following sections, we will analyze probabilities and expectations of the undesirable events $M$, $U$, and $R$.  Then we examine key lengths and noise levels (in terms of bit-flipping probabilities) required for to control the probabilities of those events.

\section{Mismatches and Missed Matches}
\label{sec:main-theory}
Let $s$ be the number of data sources, and let $m_i$ be the number of noisy keys from data source $i$ for $1 \leq i \leq s$. Assume noisy keys from the same data source result from distinct values. We are concerned with matching among pairs of noisy keys from different data sources. Let $Q$ be the set of those pairs. Since $Q$ is all pairs of noisy keys except those from the same data source:
\be
|Q| = {{\sum_{i = 1}^s m_i}\choose{2}} - \sum_{i = 1}^s {{m_i}\choose{2}}. \label{size_q}
\ee

Let $n$ be the length of each hash code and each resulting noisy key. Let $p_f$ be the probability of flipping each bit while transforming a hash code into a noisy key. Let $\pd$ be the probability that bit-flipping alters one bit but not the other in some position in a pair of noisy keys. Since there are 2 choices for which bit to flip, and one must flip but not the other, 
\be
\pd = 2 p_f (1 - p_f).
\ee

\subsection{Mismatches}
Recall the binomial distribution's probability mass function (pmf):
\be
b(k, n, p) = {{n}\choose{k}} p^k (1 - p)^{n - k}.
\ee
And use $B(k_o, k_f, n, p)$ to denote the probability of a binomial random variable having value in $[k_o, k_f]$:
\be
B(k_o, k_f, n, p) = \sum_{k = k_o}^{k_f} {{n}\choose{k}} p^k (1 - p)^{n - k}.
\ee

Recall that $M$ is the event that for a pair $(a', b') \in Q$, noisy keys $a'$ and $b'$ match, but they come from different values. 
\begin{lemma}
Let
\be
p_M = \prob{d(a',b') < t | v_a \not= v_b}
\ee
be the probability of a match given different values. Then
\be
p_M = B\left(0, t - 1, n, \frac{1}{2}\right).
\ee
\end{lemma}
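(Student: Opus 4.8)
The plan is to compute $p_M$ by analyzing, position by position, when a disagreement occurs between the two noisy keys $a'$ and $b'$, conditioned on the values being unequal. First I would observe that since $v_a \neq v_b$, each of the $n$ bit positions in the underlying hash codes $a$ and $b$ independently has probability $\frac{1}{2}$ of agreeing and $\frac{1}{2}$ of disagreeing, by the assumption on the public hash function stated in the Notation section. Then I would track what the independent bit-flipping does to a single position: the noisy keys $a'$ and $b'$ disagree in that position if and only if exactly one of the following holds — the original bits agreed and the flips toggled exactly one of them, or the original bits disagreed and the flips toggled both or neither.

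Next I would carry out the per-position probability calculation. Writing $q = \frac{1}{2}$ for the probability the original bits agree, and recalling $\pd = 2 p_f(1-p_f)$ is the probability that flipping changes exactly one of the two bits in a position, the probability that position $i$ is a disagreement in the noisy keys is
\be
q \cdot \pd + (1 - q)\bigl(1 - \pd\bigr) = \tfrac{1}{2}\pd + \tfrac{1}{2}\bigl(1 - \pd\bigr) = \tfrac{1}{2}.
\ee
The $\pd$ terms cancel exactly, so the per-position disagreement probability is $\frac{1}{2}$ regardless of $p_f$. Moreover, these events are independent across the $n$ positions, since the hash-code agreements are independent across positions and the bit flips are independent across positions and sources.

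It follows that $d(a', b')$, conditioned on $v_a \neq v_b$, is a sum of $n$ independent Bernoulli$(\tfrac{1}{2})$ random variables, hence $\mathrm{Binomial}(n, \tfrac{1}{2})$. The event $M$ restricted to this conditioning is $\{d(a',b') < t\} = \{d(a',b') \le t-1\}$, so $p_M = B\bigl(0, t-1, n, \tfrac{1}{2}\bigr)$, which is the claimed expression.

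The main obstacle — really the only subtle point — is justifying the exact cancellation of the $\pd$ terms and, relatedly, the independence across positions. The cancellation hinges critically on the hash-code agreement probability being exactly $\frac{1}{2}$; intuitively, a uniformly random bit is already maximally noisy, so adding symmetric flip-noise cannot change its distribution. I would make the per-position case analysis explicit (agree-and-one-flip versus disagree-and-zero-or-two-flips) to confirm the arithmetic, and I would note that independence across positions follows because both the hash-function-induced agreement pattern and the two sources' flip patterns are independent coordinatewise, so no conditioning couples the coordinates.
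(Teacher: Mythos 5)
Your proof is correct and follows essentially the same route as the paper: both arguments reduce to showing that the per-position disagreement probability between the noisy keys remains exactly $\frac{1}{2}$ regardless of $p_f$ (the paper via the symmetry computation $\frac{1}{2}(1-p_f)+\frac{1}{2}p_f=\frac{1}{2}$ showing each noisy bit stays uniform, you via the equivalent cancellation $\frac{1}{2}\pd+\frac{1}{2}(1-\pd)=\frac{1}{2}$), and then the Binomial$(n,\frac{1}{2})$ conclusion follows from independence across positions. If anything, your write-up is more explicit than the paper's, which leaves the final step to the binomial distribution implicit.
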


\begin{proof}
Recall that our probabilities are over selection of a hash function, and we assume that assigns each bit either 0 or 1 with probability $\frac{1}{2}$ for each value, independently over bits and between hash codes for $v_a \not= v_b$. If the bit-flip probability $p_f$ is independent of whether a bit has value 0 or 1, then, by symmetry, the probability of value 0 or 1 remains $\frac{1}{2}$ after random bit-flipping. To see this, the probability of 0 is the probability of starting with 0 and not flipping the bit plus the probability of starting with 1 and flipping the bit: $\frac{1}{2} (1 - p_f) + \frac{1}{2} p_f = \frac{1}{2}$. 
\end{proof}

\subsection{Missed Matches}
Recall that $U$ is the event that $(a', b') \in Q$ are unmatched and they come from the same value. 
\begin{lemma}
Let 
\be
p_U = \prob{d(a',b') \geq t | v_a = v_b}.
\ee
Then
\be
p_U = B(t, n, n, \pd).
\ee
\end{lemma}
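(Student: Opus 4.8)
The plan is to reduce event $U$ to a statement about a binomial count of per-position disagreements, exploiting that equal values force equal hash codes. First I would invoke the setup: the public hash function is deterministic, so $v_a = v_b$ implies $a = b$. Hence, conditioned on $v_a = v_b$, both noisy keys are independent noisy copies of one common $n$-bit hash code $c$, each obtained by flipping every bit of $c$ independently with probability $p_f$, and the flips used to form $a'$ are independent of those used to form $b'$.

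Next I would analyze a single bit position $i$. Let $X_i$ be the indicator that $a_i' \neq b_i'$. Starting from the common bit $c_i$, the two keys disagree at position $i$ exactly when precisely one of the two independent flips occurs there, which has probability $p_f(1-p_f) + (1-p_f)p_f = 2 p_f (1 - p_f) = \pd$, regardless of the value of $c_i$. Because all bit flips are mutually independent across the $n$ positions and between the two keys, the indicators $X_1, \dots, X_n$ are i.i.d.\ Bernoulli($\pd$), so $d(a', b') = \sum_{i=1}^n X_i$ is binomially distributed with parameters $n$ and $\pd$. Then $p_U = \prob{d(a',b') \geq t \mid v_a = v_b} = \prob{\sum_{i=1}^n X_i \in [t, n]} = B(t, n, n, \pd)$ by the definition of $B$.

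I do not expect a substantial obstacle here; this lemma is the ``equal values'' counterpart of the mismatch lemma, with the role of the $\tfrac{1}{2}$ agreement probability replaced by $\pd$. The only points requiring care are the independence bookkeeping — that the randomness flipping $c$ into $a'$ is independent of that flipping $c$ into $b'$, and independent across positions — and the observation that $\pd$ is the same for every position irrespective of the underlying bit value, which parallels the symmetry argument used in the proof of the previous lemma.
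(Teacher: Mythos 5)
Your proposal is correct and follows essentially the same route as the paper's proof: since $v_a = v_b$ forces $a = b$, each position disagrees exactly when one but not both of the independent flips occurs, with probability $\pd = 2p_f(1-p_f)$, and independence across positions makes $d(a',b')$ binomial with parameters $n$ and $\pd$, giving $p_U = B(t, n, n, \pd)$. Your write-up simply makes the independence bookkeeping more explicit than the paper's brief argument.
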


\begin{proof}
Since $v_a = v_b$, their hash codes are equal: $a = b$. So adding noise must introduce at least $t$ disagreements to get $d(a',b') \geq t $. For each bit position, the probability of introducing disagreement is $\pd$.
\end{proof}

\subsection{Matching Errors Over All Pairs}
Since mismatches and missed matches are disjoint, the probability of either happening is at most the maximum of their probabilities:
\begin{lemma} \label{max_lemma}
Let $p_W$ be the probability that a pair of noisy keys $(a', b') \in Q$ has a mismatch or a missed match. Then
\be
p_W \leq \max(p_M, p_U).
\ee
\end{lemma}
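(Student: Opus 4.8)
The plan is to exploit the fact that the two error events are mutually exclusive and then express the total error probability as a weighted average of $p_M$ and $p_U$. First I would observe that, for a fixed pair $(a', b') \in Q$, the events $\{v_a = v_b\}$ and $\{v_a \neq v_b\}$ partition the sample space, and that the mismatch event $M$ is contained in $\{v_a \neq v_b\}$ while the missed-match event $U$ is contained in $\{v_a = v_b\}$. Consequently $M$ and $U$ are disjoint, so $p_W = \prob{M \cup U} = \prob{M} + \prob{U}$.

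Next I would unpack each term using the conditional probabilities defined in the previous two lemmas. By definition of conditional probability, $\prob{M} = \prob{d(a',b') < t \mid v_a \neq v_b}\,\prob{v_a \neq v_b} = p_M\,\prob{v_a \neq v_b}$, and likewise $\prob{U} = p_U\,\prob{v_a = v_b}$. Writing $q = \prob{v_a = v_b}$, this yields $p_W = (1 - q)\,p_M + q\,p_U$, which is a convex combination of $p_M$ and $p_U$. Since any convex combination of two real numbers is at most their maximum, $p_W \leq \max(p_M, p_U)$, as claimed.

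I do not anticipate a genuine obstacle here: the statement reduces to the observation that a weighted average cannot exceed the larger of the two quantities being averaged, together with disjointness of $M$ and $U$. The only point needing a little care is the bookkeeping between conditional and unconditional probabilities, since $p_M$ and $p_U$ are conditioned on the value relationship while $p_W$ is not; I would make sure the weights $q$ and $1-q$ are introduced correctly and that they sum to one, which is exactly what makes the maximum an upper bound. It is worth flagging as a strength of the argument that nothing needs to be assumed about $q$ — the probability that two different sources hold the same value — so the bound holds uniformly over all such distributions.
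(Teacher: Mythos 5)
Your proof is correct and follows essentially the same route as the paper: it uses disjointness of $M$ and $U$ to write $p_W = \prob{M} + \prob{U}$, conditions each term on whether $v_a = v_b$ to obtain the convex combination $p_W = (1-q)p_M + q p_U$, and bounds this by the maximum. No gaps; the bookkeeping between conditional and unconditional probabilities is handled exactly as in the paper's argument.
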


\begin{proof}
Note that
\be
p_W = \prob{M \lor U}.
\ee
Since $M$ requires $v_a \not= v_b$ and $U$ requires $v_a = v_b$, they are disjoint events and $p_W = \prob{M} + \prob{U}$.
Note that
\begin{align}
\prob{U} 
&= \prob{d(a',b') \geq t \land v_a = v_b} \\
&= \prob{d(a',b') \geq t | v_a = v_b} \prob{v_a = v_b} \\
&= p_U \prob{v_a = v_b}. 
\end{align}
Similarly,
\be
\prob{M} = p_M \prob{v_a \not= v_b}.
\ee
So
\begin{align}
  p_W &=  \prob{M} + \prob{U} \\ 
  &= p_M \prob{v_a \not= v_b} + p_U \prob{v_a = v_b}
\end{align}
But $\prob{v_a \not= v_b} = 1 - \prob{v_a = v_b}$, so
\be
p_W = p_M (1- \prob{v_a = v_b}) + p_U \prob{v_a = v_b}.
\ee
Since this is a convex combination,
\be
p_W \leq \max(p_M, p_U).
\ee
\end{proof}

Consider the expected number of matching errors of either type:
\begin{thm} \label{mean_w}
Let $w$ be the number of mismatches and missed matches over all pairs in $Q$. Then
\begin{align}
    \mean w &= p_W |Q| \\
    &\leq \max(p_M, p_U) |Q| \\ 
    &= \max(p_M, p_U) \left[ {{\sum_{i = 1}^s m_i}\choose{2}} - \sum_{i = 1}^s {{m_i}\choose{2}} \right].
\end{align}
\end{thm}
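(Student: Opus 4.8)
The plan is a one-line application of linearity of expectation, with the per-pair bound already supplied by Lemma~\ref{max_lemma}. First I would write $w$ as a sum of indicator random variables: for each pair $(a',b') \in Q$ let $X_{a'b'}$ be the indicator of the event that this pair is either a mismatch or a missed match, i.e.\ that $M \lor U$ occurs for $(a',b')$. Then $w = \sum_{(a',b') \in Q} X_{a'b'}$, since $w$ counts exactly the pairs in $Q$ on which one of these two errors happens, and because $M$ requires $v_a \ne v_b$ while $U$ requires $v_a = v_b$ the two events are mutually exclusive, so no pair is double-counted.

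Next I would take expectations. For a single pair, $\mean X_{a'b'} = \prob{M \lor U} = p_W$, the per-pair error probability analysed above, so linearity of expectation gives $\mean w = \sum_{(a',b') \in Q} p_W = p_W\,|Q|$, which is the first claimed equality. To pass to the bound, apply Lemma~\ref{max_lemma}, which states $p_W \le \max(p_M, p_U)$; note that $p_M = B(0, t-1, n, \tfrac{1}{2})$ and $p_U = B(t, n, n, \pd)$ from the preceding two lemmas depend only on $n$, $t$, and $p_f$, not on the particular pair, so the bound $\mean X_{a'b'} \le \max(p_M, p_U)$ holds uniformly over $Q$. Summing over the $|Q|$ pairs yields $\mean w \le \max(p_M, p_U)\,|Q|$.

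Finally I would substitute the closed form for $|Q|$ from \eqref{size_q}, namely $|Q| = \binom{\sum_{i=1}^s m_i}{2} - \sum_{i=1}^s \binom{m_i}{2}$, to obtain the last displayed expression. No nontrivial computation is required; the only bookkeeping is checking that $|Q|$ as defined in \eqref{size_q} indeed counts precisely the cross-source pairs that the indicators range over.

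The one point that needs care — and the only place where the statement is slightly delicate — is the first equality $\mean w = p_W\,|Q|$. Strictly, the marginal probability that a given pair is in error is $p_M\,\prob{v_a \ne v_b} + p_U\,\prob{v_a = v_b}$, as derived inside the proof of Lemma~\ref{max_lemma}, and this can vary from pair to pair if different pairs have different chances of sharing a value; so $p_W$ should be read either as a common per-pair value under the modelling assumption that all cross-source pairs are exchangeable, or as the average of these marginals. The inequality $\mean w \le \max(p_M, p_U)\,|Q|$, by contrast, is completely robust to this issue, since the convex-combination bound of Lemma~\ref{max_lemma} holds for every pair irrespective of $\prob{v_a = v_b}$; this is the form that matters for the rest of the paper, so I would emphasise it and treat the equality as convenient shorthand.
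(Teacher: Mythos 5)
Your proposal is correct and follows essentially the same route as the paper: linearity of expectation over the pairs in $Q$ gives $\mean w = p_W|Q|$, then Lemma~\ref{max_lemma} bounds $p_W$ and Expression~\eqref{size_q} is substituted for $|Q|$. Your closing remark about the per-pair probability possibly varying with $\prob{v_a = v_b}$ is a fair caveat the paper glosses over, but it does not change the argument, since (as you note) the inequality holds uniformly regardless.
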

\begin{proof}
By linearity of expectations, the expectation over $Q$ is the probability for each pair, $p_W$, times $|Q|$. For the second and third lines, apply Lemma \ref{max_lemma} for $p_W$ and Expression \ref{size_q} for $|Q|$. 
\end{proof}

Now consider the distribution of the number of matching errors:
\begin{cor} \label{markov}
\be
\forall h \geq 0: \prob{w \geq h} \leq \frac{\max(p_M, p_U) |Q|}{h}.
\ee
\end{cor}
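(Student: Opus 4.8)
The statement to prove is that $\prob{w \geq h} \leq \frac{\max(p_M, p_U)|Q|}{h}$ for all $h \geq 0$. The plan is to apply Markov's inequality directly to the nonnegative integer-valued random variable $w$, using the expectation bound already established in Theorem~\ref{mean_w}.

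First I would note that $w$, the number of mismatches and missed matches over all pairs in $Q$, is a nonnegative random variable, so Markov's inequality applies: for any $h > 0$, $\prob{w \geq h} \leq \frac{\mean w}{h}$. The case $h = 0$ is trivial since any probability is at most $1 \leq \frac{\max(p_M,p_U)|Q|}{0}$ interpreted as $+\infty$ (or one simply notes the bound is vacuous there); I would mention this only in passing or restrict attention to $h > 0$. Second, I would substitute the bound from Theorem~\ref{mean_w}, namely $\mean w = p_W|Q| \leq \max(p_M, p_U)|Q|$, into the right-hand side of Markov's inequality to obtain $\prob{w \geq h} \leq \frac{\mean w}{h} \leq \frac{\max(p_M, p_U)|Q|}{h}$, which is exactly the claimed inequality.

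There is essentially no obstacle here: the corollary is an immediate consequence of Markov's inequality combined with the already-proven expectation bound, and the proof is two lines. The only minor point worth a word of care is the degenerate case $h = 0$, where the inequality holds trivially (the right side is infinite, or the statement is understood to be non-binding), so the substantive content is entirely for $h > 0$. I would write the proof as: apply Markov's inequality to $w \geq 0$, then invoke Theorem~\ref{mean_w} to replace $\mean w$ by the stated upper bound.
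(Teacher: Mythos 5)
Your proof is correct and matches the paper's argument exactly: apply Markov's inequality to the nonnegative random variable $w$ and substitute the expectation bound from Theorem~\ref{mean_w}. The remark about the degenerate case $h = 0$ is a reasonable extra bit of care but does not change the argument.
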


\begin{proof}
Apply Markov's inequality to nonnegative random variable $w$, using Theorem \ref{mean_w} for the expectation.
\end{proof}

For the probability of no matching errors:

\begin{cor} \label{cor_zero}
\be
\prob{w = 0} \geq 1 -  \max(p_M, p_U) |Q|.
\ee
\end{cor}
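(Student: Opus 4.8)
The plan is to derive this directly from the Markov-type bound already established in Corollary \ref{markov}, using the fact that $w$ is a nonnegative integer-valued random variable. First I would observe that $\{w = 0\}$ and $\{w \geq 1\}$ are complementary events, since $w$ takes values in $\{0, 1, 2, \dots\}$; hence $\prob{w = 0} = 1 - \prob{w \geq 1}$.

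Next I would invoke Corollary \ref{markov} with the specific choice $h = 1$, which gives $\prob{w \geq 1} \leq \max(p_M, p_U)\,|Q|$. Substituting into the previous identity yields $\prob{w = 0} \geq 1 - \max(p_M, p_U)\,|Q|$, which is exactly the claim. An equivalent route, if one prefers to avoid citing the corollary, is a union bound: $\prob{w \geq 1}$ is the probability that at least one pair in $Q$ is a mismatch or missed match, which is at most $\sum_{(a',b') \in Q} p_W = p_W\,|Q|$, and then Lemma \ref{max_lemma} bounds $p_W$ by $\max(p_M, p_U)$. I would likely present the Markov version for brevity since Corollary \ref{markov} is immediately above.

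There is no real obstacle here; the only thing to be careful about is that the bound is only informative when $\max(p_M, p_U)\,|Q| < 1$, so I might add a remark (or leave it implicit) that the inequality is vacuous otherwise, and that the later sections choosing $n$ and $p_f$ are precisely what make the right-hand side close to $1$. The proof itself is a single line invoking Corollary \ref{markov} at $h = 1$ together with $\prob{w=0} = 1 - \prob{w \geq 1}$.
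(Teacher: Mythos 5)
Your proposal is correct and matches the paper's proof exactly: the paper also applies Corollary \ref{markov} with $h = 1$ and uses $\prob{w = 0} = 1 - \prob{w \geq 1}$ for the nonnegative integer-valued $w$. The remark about the bound being vacuous when $\max(p_M, p_U)\,|Q| \geq 1$ is a reasonable addition but not part of the paper's argument.
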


\begin{proof}
Apply Corollary \ref{markov}, with $h = 1$.
\end{proof}

\section{Obfuscation}
Recall that the median key for a set of noisy keys and a hash code that was used to generate the noisy keys is defined as the bit sequence that agrees with the majority of the noisy keys in each position, and agrees with the original hash code in positions with a tie over the noisy keys. It is the ``best guess" for which hash code generated the noisy keys, with ties broken in favor of guessing correctly, to be conservative. Also recall that event $R$ occurs for a noisy key if and only if the set of noisy keys with the same hash code as the noisy key has median key equal to that hash code.

\begin{lemma}
Let $p_R(z)$ be the probability that, for a set of $z$ noisy keys that have the same hash code, for all $n$ positions, the median bit value is the same as the bit value for that position in the hash code. 
\be
p_R(z) = B\left(0, \left\lfloor \frac{z}{2} \right\rfloor, z, p_f\right)^n.
\ee
\end{lemma}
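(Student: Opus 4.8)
The plan is to compute the probability of event $R$ by reducing it to a per-position event and then invoking independence across positions. First I would fix a set of $z$ noisy keys $a^{(1)}, \dots, a^{(z)}$ all generated from the same hash code $c$. For a single bit position $i$, the relevant question is whether the median bit equals $c_i$. By definition of the median key, the bit in position $i$ of the median disagrees with $c_i$ only if a \emph{strict} majority of the $z$ noisy keys disagree with $c_i$ in position $i$ (a tie is broken in favour of $c_i$). Each noisy key disagrees with $c_i$ in position $i$ independently with probability $p_f$, so the number of disagreements in position $i$ is a binomial random variable with parameters $z$ and $p_f$. The median agrees with $c_i$ exactly when this count is at most $\lfloor z/2 \rfloor$ (that is: a strict majority for disagreement requires \emph{more than} $z/2$ disagreements, i.e.\ at least $\lfloor z/2 \rfloor + 1$). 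Hence the per-position probability of agreement is $B\!\left(0, \lfloor z/2 \rfloor, z, p_f\right)$.

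The second step is to combine positions. The bit-flips are independent across the $n$ positions and across the $z$ keys (private randomness per source per bit), and the hash code bits $c_1, \dots, c_n$ are fixed once the hash function is chosen, so the $n$ per-position agreement events are mutually independent. Event $R$ is precisely the intersection of ``median agrees with $c_i$'' over all $i = 1, \dots, n$, so its probability is the product of the $n$ identical per-position probabilities, giving
\be
p_R(z) = B\left(0, \left\lfloor \frac{z}{2} \right\rfloor, z, p_f\right)^n.
\ee
One subtlety worth spelling out is that the $\frac{1}{2}$-probability assumption on hash codes is irrelevant here: event $R$ conditions on a \emph{fixed} hash code and on the keys all sharing it, so the only randomness in play is the bit-flipping, and the computation holds for any value of $c$.

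I do not expect a serious obstacle; the one place to be careful is the tie-breaking convention, since an off-by-one error in the majority threshold would replace $\lfloor z/2 \rfloor$ by $\lceil z/2 \rceil - 1$ or by $\lceil (z+1)/2 \rceil$, which differ when $z$ is even. The statement's use of $\lfloor z/2 \rfloor$ together with the ``tie broken in favour of the hash code'' rule is exactly the conservative (obfuscation-unfriendly) choice: when $z$ is even and the split is $z/2$ versus $z/2$, the median is declared equal to $c_i$, so agreement occurs, which is why the upper summation limit is $\lfloor z/2 \rfloor$ rather than $\lceil z/2 \rceil - 1$. Once that boundary case is pinned down, the rest is immediate from independence.
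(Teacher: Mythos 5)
Your proposal is correct and follows essentially the same argument as the paper: per position, the number of flipped bits among the $z$ keys is binomial with parameters $z$ and $p_f$, agreement holds exactly when at most $\left\lfloor z/2 \right\rfloor$ bits are flipped (thanks to the tie-breaking rule), and independence across the $n$ positions yields the $n$-th power. Your extra care with the tie-breaking boundary is a more explicit spelling-out of the same reasoning, not a different route.
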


\begin{proof}
If there are at most $\lfloor \frac{z}{2} \rfloor$ bit flips, then the median bit value remains the same as the hash code bit value. This is independent among positions.
\end{proof}

For the expected number of noisy keys with hash codes ``revealed" by median keys: let $r$ be the number of noisy keys for which $R$ occurs.
\begin{thm} \label{ob_thm}
Let $r$ be the number of noisy keys for which $R$ occurs. Then
\be
\mean r \leq \max_{z \in \sset{1, \ldots, s}} p_R(z) \sum_{i = 1}^s m_i.
\ee
\end{thm}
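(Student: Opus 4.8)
The plan is to expand $r$ as a sum of indicator random variables and apply linearity of expectation, then regroup the terms according to the underlying hash code. For each noisy key $k$, let $X_k$ be the indicator of the event that the set of noisy keys sharing $k$'s hash code has median key equal to that hash code; this is precisely event $R$ for $k$, so $r = \sum_k X_k$ and therefore $\mean r = \sum_k \prob{X_k = 1}$.

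Next I would partition the collection of all $\sum_{i=1}^s m_i$ noisy keys into groups according to the hash code from which they were generated. Because noisy keys from the same data source correspond to distinct values, hence distinct hash codes, each group contains at most one noisy key per data source, so every group has some size $z$ with $1 \le z \le s$. For a group of size $z$, event $R$ is a single event shared by all $z$ keys in that group, and by the preceding lemma its probability is $p_R(z)$; hence that group contributes exactly $z\, p_R(z)$ to the total, giving $\mean r = \sum_{\text{groups}} z\, p_R(z)$.

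Finally, I would bound each term by $z\, p_R(z) \le \bigl(\max_{z' \in \sset{1, \ldots, s}} p_R(z')\bigr)\, z$ and sum over all groups, using the fact that the group sizes sum to the total number of noisy keys $\sum_{i=1}^s m_i$. This yields $\mean r \le \max_{z \in \sset{1, \ldots, s}} p_R(z)\, \sum_{i=1}^s m_i$, as claimed.

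The argument is largely bookkeeping, and there is no deep obstacle; the point that needs the most care is the claim that every group has size in $\sset{1, \ldots, s}$, which is exactly where the standing assumption that a single data source never repeats a value is used — without it a group could exceed $s$ and the maximum in the bound would be taken over the wrong range. A secondary point worth stating explicitly is that $R$ really is a property of an entire group of same-hash-code keys, so all $z$ indicators within a group coincide and each has probability $p_R(z)$, which is what makes the per-group contribution $z\, p_R(z)$ rather than something requiring a per-key analysis.
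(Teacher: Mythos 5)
Your proposal is correct and follows essentially the same route as the paper: write $r$ as a sum of per-key indicators, note that each key's probability of $R$ is $p_R(z)$ for its group size $z \in \sset{1,\ldots,s}$, and apply linearity of expectation with the bound $p_R(z) \leq \max_{z'} p_R(z')$. Your version is simply a more explicit write-up (grouping keys by hash code and summing group sizes) of the argument the paper states in three sentences.
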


\begin{proof}
Each noisy key may be in a set of from $1$ to $s$ noisy keys with the same hash code. So $p_R(z)$ is its probability of $R$ for some $z \in \sset{1, \ldots, s}$. Relying on linearity of expectation, we can sum those $p_R(z)$ values over the number of noisy keys.
\end{proof}

\section{Empirical Results}
\subsection{Numbers and Figures}
We now provide a series of plots illustrating the bounds derived in Section \ref{sec:main-theory}.
Figure \ref{n_req_plot} shows key lengths sufficient to ensure 95\% confidence that all pairwise matches are correct (based on Corollary \ref{cor_zero}), given 95\% confidence that each median of matched noisy keys is not equal to its original key -- so on average 95\% obfuscation (by Theorem \ref{ob_thm}). Each plotted value is for the worst-case arrangement of matching values over the $s$ data sources (generally, having full sets of $s$ matches) and for optimal bit-flip probability $p_f$ and matching threshold $t$. Noisy key lengths grow slowly in the number of keys per data source ($m$), and somewhat aggressively in the number of data sources contributing keys for matching ($s$). The key lengths are similar for even $s$ values and the next odd values, because we assume that a median noisy key with a tied number of zero and one values for a bit position is the same as the original key's bit value for the position, to be conservative.

\begin{figure}
\includegraphics[width=3.5in]{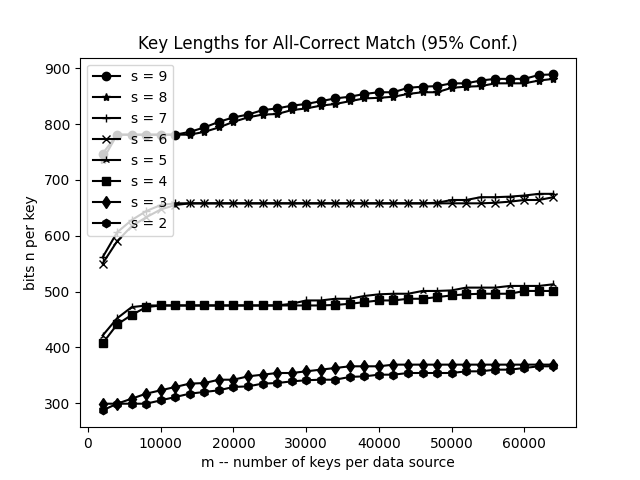} 
\caption{Number of bits per noisy key ($n$) sufficient to achieve at least 95\% confidence of no pairwise matching errors while maintaining a 95\% probability that the median of matched noisy keys does not match the original key. In general, the key length grows with the number of data sources $s$. Key lengths are nearly equal for each even $s$ and $s + 1$, because we define each median noisy key bit to have the bit value of the original key in case of a tie in bit values over the set of noisy keys.} \label{n_req_plot}
\end{figure}

Figures \ref{pr_plot} to \ref{q_size_plot} give a sense of the scales of probabilities of events $M$ (incorrectly matching different values), 
$U$ (incorrectly not matching two keys of the same value), and $R$ 
(the median key reveals the true hash code), and numbers of pairs of potential matches among noisy keys $|Q|$. Note that most of the plots are logarithmic on the $y$-axis. 

Figure \ref{pr_plot} presents probabilities of $R$ -- the event that the median key of a set of noisy keys from the same value may reveal the hash code that generated the key. 
That would allow anyone with the hash function and a value of interest to test whether the value hashes to the revealed hash code.
More noise (higher bit-flip probability $p_f$) makes revelation less likely, as does using more bits in the noisy keys, as it allows more opportunities for some bit position to be flipped in the majority of the noisy keys from the same value.
Inspecting the pairs of curves for $p_f=0.09, p_f=0.18$ and $p_f = 0.12, 
p_f=0.24$ we see that the
magnitude of the gradient increases by roughly a factor five.
This means that doubling the noise probability increases the exponential
rate parameter describing the probabilities by a factor of roughly five.
The plot is for 2 data sources.

\begin{figure}
\includegraphics[width=3.5in]{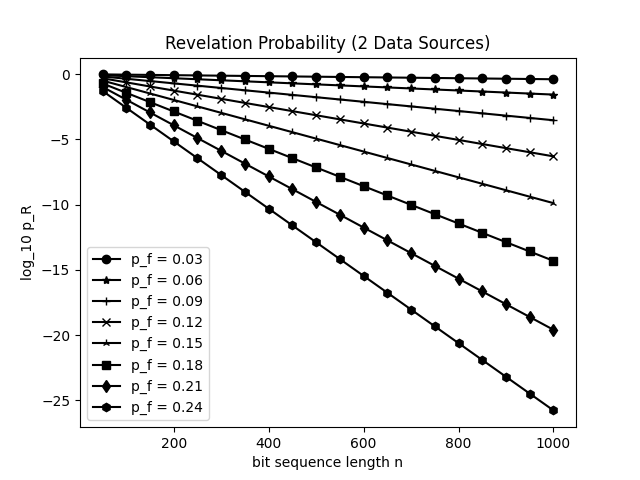} 
\caption{Upper bound on probability that the median key for a noisy keys set of matched noisy keys is equal to its hash code. More noise (higher bit-flip probability $p_f$) makes this possibility to identify a noisy key's hash code less likely, and more bits in the noisy key ($n$) make it exponentially less likely. (The values are for $s = 2$ data sources.)} \label{pr_plot}
\end{figure}

Figure \ref{pf_needed_plot} presents bit-flip probabilities $p_f$ required to achieve low revelation probabilities as the bit sequence length 
varies. 
The probabilities range from about a fifth to a third for $n = 50$ bits per noisy key (on the left) to about 5\% to 10\% for $1000$ bits per noisy key. The numbers presented are for 2 data sources -- they would increase with more data sources. Using less noise makes correct matching more likely.

\begin{figure}
\includegraphics[width=3.5in]{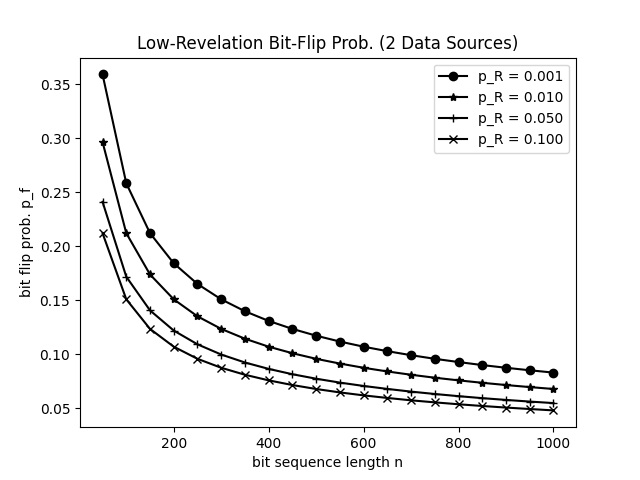} 
\caption{Minimum bit-flip probability to ensure the upper bound on revelation probability ($p_R$) is at most some selected values. Achieving a revelation probability bound less than 0.05 with fewer than 400 bits per noisy key requires bit-flip probabilities at least 0.05. For 50 bits per noisy key (the leftmost data points), a bit-flip probability of about 0.25 is required for a revelation probability bound of 0.05. (The values are for a pair of data sources.)} \label{pf_needed_plot}
\end{figure}

Figures \ref{pm_plot} and \ref{pu_plot} show probabilities of the 
incorrect matching events $M$ and $U$ for each pair of noisy keys from different data sources. 
Event $M$ is an incorrect match -- matching noisy keys that result from different values, put another way: failure to separate a pair of noisy keys that should not be matched. 
Event $U$ is failure to match a pair of noisy keys that should be matched. Failure to separate is independent of bit-flip probability $p_f$, but failure to match depends on it. The $p_f$ values used for Figure \ref{pu_plot} are the minimums necessary to bound the revelation probability $p_R$ by 5\% for 2 data sources. 

\begin{figure}
\includegraphics[width=3.5in]{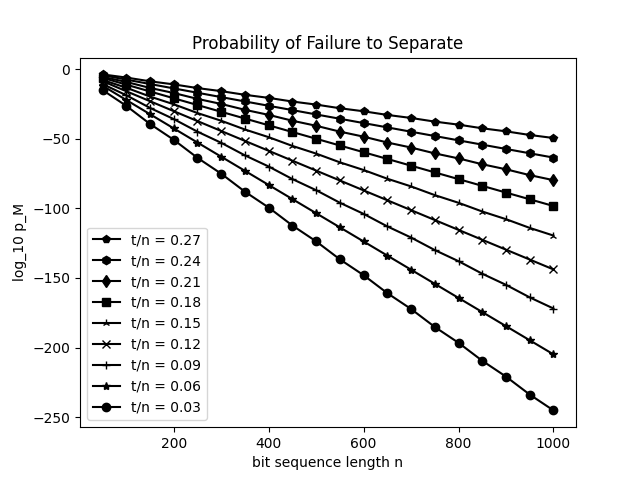} 
\caption{Probabilities $p_M$, for each pair of noisy keys, that they are matched, given that they should not be, that is, given that they are produced from different values. Each line is for a procedure that declares a match if fewer than the specified fraction of bits disagree. For each fraction, the probability decreases exponentially in $n$ -- the number of bits per noisy key.} \label{pm_plot}
\end{figure}

\begin{figure}
\includegraphics[width=3.5in]{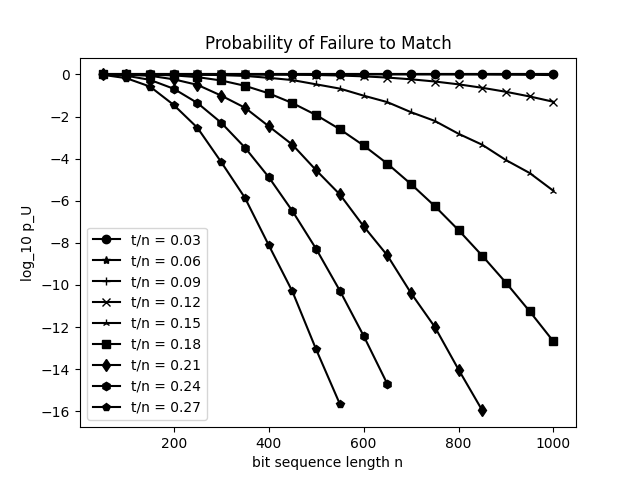} 
\caption{Probabilities $p_U$, for each pair of noisy keys, that they are left unmatched, given that they should be matched, since equal values produce them. Each line is for a procedure that declares a non-match if at least a specified fraction of bits disagree. For each number of bits in each noisy key ($n$), the probabilities shown are based on using the bit-flip probability $p_f$ that required for a revelation probability ($p_R$) upper bound of 0.05, with a pair of data sources. Decreases in $n$ are super-exponential because that $p_f$ value decreases with $n$, as shown in Figure \ref{pf_needed_plot}. Nonetheless, over 300 bits are needed per noisy key to make this probability less than one in a billion, even if noisy keys match if they agree on only three-quarters or more of their bits.} \label{pu_plot}
\end{figure}

Figure \ref{pw_plot} shows the probabilities of event $M \lor U$ -- either type of matching error -- for each pair of noisy keys from different data sources, given that bit-flip probability $p_f$ is large enough to ensure revelation probability $p_R \leq 0.05$. The results shown are for the optimal choices among those bit flip probabilities and matching threshold $t$ -- the choices that minimize the maximum of $p_M$ and $p_U$. Each line corresponds to a number of data sources; as that number grows, the probabilities of matching errors increase substantially, because $s$ data sources allows up to $s$ noisy keys for the same value, making it more likely that the median of those keys is the same as their (pre-noise) hash code, so forcing a higher bit-flip probability to achieve a low revelation probability. 

\begin{figure}
\includegraphics[width=3.5in]{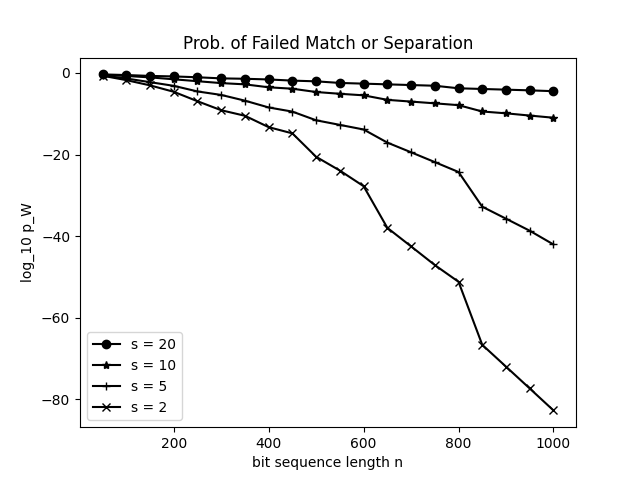} 
\caption{Probabilities $p_W$, for each pair of noisy keys, that they are incorrectly matched -- either matched though they have different values or left unmatched though they have the same values. Each point is for the optimal choice of bit-flip probability $p_f$ among those great enough to produce revelation probability $p_R \leq 0.05$ and the optimal choice of matching threshold $t \in \sset{1, ..., n}$. Each line shows results for a different number of data sources $s$.}
\label{pw_plot}
\end{figure}

Figure \ref{q_size_plot} shows how the number of pairs of noisy keys from different data sources grows as the number of keys per data source $m_i$ and the number of data sources $s$ increase. The growth is sub-exponential in the number of keys per data source, as it is O($ms^2$). In contrast, the values for $p_M$ and $p_U$ decrease exponentially and super-exponentially, respectively, in the number of bits per noisy key. This is promising as problem sizes increase, since the product of these numbers bounds the expected number of matching errors of any type (by Theorem \ref{mean_w}) and the probability of any matching error (by Corollary \ref{cor_zero}).

\begin{figure}
\includegraphics[width=3.5in]{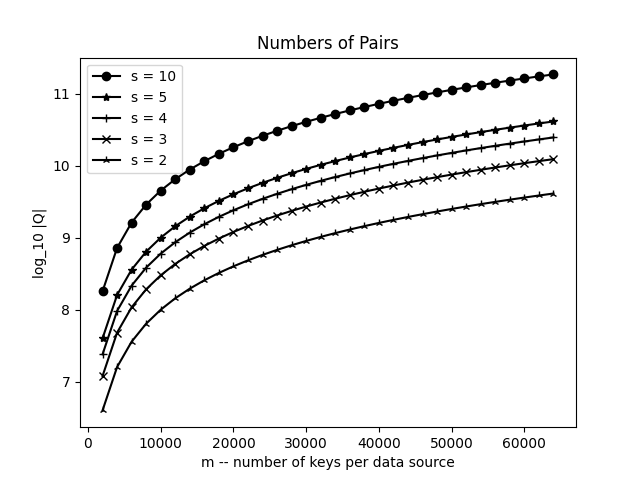} 
\caption{Numbers of pairs of noisy keys from different data sources. The base-10 logarithm of the expected number of matching errors for 2 data sources is at most the numbers plotted here plus the numbers on the line for $s = 2$ in Figure \ref{pw_plot}, according to Theorem \ref{mean_w}. As an example, for 2 data sources and $m_1 = m_2 = 64\,000$ noisy keys (the point on the right of the lowest line in this figure), the number of pairs $|Q|$ is between $10^9$ and $10^{10}$. For noisy key length $n = 400$ and $s = 2$, Figure \ref{pw_plot} shows that the error probability for each pair is less than $10^{-13}$. So that gives an expected number of matching errors on the order of $10^{-3}$, implying (via Corollary \ref{cor_zero}) about a 99.9\% probability of correct matching.} \label{q_size_plot}
\end{figure} 

\subsection{Discussion of Empirical Results}
The empirical results show that noisy key lengths required for correct matching with high probability while achieving 95\% obfuscation grow slowly in the number of keys per data source $s$ but aggressively in the number of contributors $s$ wishing to protect their data from each other. At the high end of the number of keys per source plotted (about $2^{16}$), the accuracy compares favorably 
with LiguidLegions \cite{ghazi2022multiparty}, which counts frequencies up to about $10^{10}$ using $2^{16}$ keys with relative error about $2.5\%$. (As far as we know, that is the only comparable method that merges noisy sketches for frequency histogram estimation.) 
Although LiquidLegions uses $256$ bits per key, their matching
scheme relies on cryptographic functions that are slow to 
evaluate.
Regarding the number of contributors, sketches that offer privacy consistently have performance degradation in the number of parties sharing data \cite{desfontaines2019cardinality, ghazi2022multiparty, hehir2023sketch}, with noticeable losses in accuracy as the number of contributors progresses from $2$. The amount of computation required for merging with privacy can also increase substantially, for example if a method's cryptographic communication protocol requires computation that is quadratic in the number of contributors \cite{ghazi2022multiparty}. 
Finally, the method of noisy keys presented in this paper can be applied to key-based methods such as Bottom-$k$ sketches \cite{beyer2009distinct,bar2002counting,cohen2007summarizing}
and the related Theta sketches \cite{dasgupta2015framework,asf-datasketches} that use a fixed number of keys, which is convenient since it allows implementors to determine sketch sizes a priori. 

\section{Conclusions and Future Work}
We have examined how random bit flips can obfuscate hash codes and still allow effective matching. We found that an upper bound on the probability that the median of noisy keys from the same hash code reveals the hash code -- the revelation probability -- implies a lower bound on the bit-flip probability required, and that bound grows with the number of data sources and shrinks with more bits per key. For matching error, selecting a threshold number of bit positions that must agree to declare a match mediates a tradeoff between two types of matching error -- matching noisy keys that have different values or failing to match those with the same value. The probabilities of both types of errors decrease exponentially as key length (in bits) increases. And bit-flip probability also plays a role in the probability of failing to match keys from equal values.

There are two ways to view error probabilities -- on a per-key or per-pair basis, or as a probability that no such errors occur over all keys or pairs. For revelation probabilities, we imposed per-key limits, meaning, for example, an upper bound of 5\% for revelation probability, which implies that, on average, fewer than 5\% of keys agree with their median keys. For matching error probabilities, we analyzed per-pair probabilities and probabilities over all pairs. For per-pair probabilities, the data for Figure \ref{pw_plot} show that the per-pair probability of either type of matching error is less than 5\% (for revelation probability at most 5\%) for 100-bit keys for up to 5 data sources, and 200-bit keys manage it for up to 10 data sources. However, achieving a low probability that no noisy key pair has a matching error of either type requires longer keys. For 2 data sources, achieving 95\% confidence that matching (or separation) is correct over all pairs requires about 300 bits for a few thousand noisy keys per data source, about 350 bits for about 10,000 keys per data source, and about 400 bits for 40 to 60,000 keys per source. 

Since one goal of sketching is to use data summaries that only require small memory footprints, using many bits per key may be a concern. 
In practice, some sketches use as few as 64 bits per key without noise and 
privacy 
\cite{asf-datasketches,dasgupta2015framework,heule2013hyperloglog}. 
However, it is known that more bits per key are needed for privacy 
deployments and both \cite{dickens2022order,ghazi2022multiparty} use $256$ bits per key.
Hence, there can be a cost (in bits) for the obfuscation achieved through noise. 
On the other hand, the numbers of bits detailed in this paper are total bits per key -- not additional bits, so any implementation using the number of bits detailed here or more would not need to add bits per key. 

Also, achieving high confidence that there are no pairwise matching or separation errors may be a conservative requirement for achieving correct matching, especially for greater numbers of data sources, which require longer noisy keys. With several data sources, it should be possible to recover from some matching errors. For example, with 5 data sources, if there are five noisy keys resulting from the same value, then there are ${{5}\choose{2}} = 10$ pairs among the noisy keys. If there are one or two errors, it may be possible to infer that the five should match on the strength of the evidence that eight or nine of the pairs match; and it is likely that the non-matching pairs would have Hamming distances close to the threshold value $t$.
More research is needed to understand how different levels of pairwise errors affect the accuracy of different matching methods. 



\bibliographystyle{IEEEtran} 
\bibliography{main}

\begin{thebibliography}{10}
\providecommand{\url}[1]{#1}
\csname url@samestyle\endcsname
\providecommand{\newblock}{\relax}
\providecommand{\bibinfo}[2]{#2}
\providecommand{\BIBentrySTDinterwordspacing}{\spaceskip=0pt\relax}
\providecommand{\BIBentryALTinterwordstretchfactor}{4}
\providecommand{\BIBentryALTinterwordspacing}{\spaceskip=\fontdimen2\font plus
\BIBentryALTinterwordstretchfactor\fontdimen3\font minus
  \fontdimen4\font\relax}
\providecommand{\BIBforeignlanguage}[2]{{%
\expandafter\ifx\csname l@#1\endcsname\relax
\typeout{** WARNING: IEEEtran.bst: No hyphenation pattern has been}%
\typeout{** loaded for the language `#1'. Using the pattern for}%
\typeout{** the default language instead.}%
\else
\language=\csname l@#1\endcsname
\fi
#2}}
\providecommand{\BIBdecl}{\relax}
\BIBdecl

\bibitem{goyal2011approximate}
A.~Goyal and H.~Daum{\'e}~III, ``Approximate scalable bounded space sketch for
  large data nlp,'' in \emph{Proceedings of the 2011 Conference on Empirical
  Methods in Natural Language Processing}, 2011, pp. 250--261.

\bibitem{goyal2012sketch}
A.~Goyal, H.~Daum{\'e}~III, and G.~Cormode, ``Sketch algorithms for estimating
  point queries in nlp,'' in \emph{Proceedings of the 2012 joint conference on
  empirical methods in natural language processing and computational natural
  language learning}, 2012, pp. 1093--1103.

\bibitem{rowe2019levee}
W.~P. Rowe, ``When the levee breaks: a practical guide to sketching algorithms
  for processing the flood of genomic data,'' \emph{Genome biology}, vol.~20,
  no.~1, pp. 1--12, 2019.

\bibitem{kockan2020sketching}
C.~Kockan, K.~Zhu, N.~Dokmai, N.~Karpov, M.~O. Kulekci, D.~P. Woodruff, and
  S.~C. Sahinalp, ``Sketching algorithms for genomic data analysis and querying
  in a secure enclave,'' \emph{Nature methods}, vol.~17, no.~3, pp. 295--301,
  2020.

\bibitem{bigquery-hll}
\BIBentryALTinterwordspacing
``Bigquery: Hyperloglog++ functions.'' [Online]. Available:
  \url{https://cloud.google.com/bigquery/docs/reference/standard-sql/hll_functions}
\BIBentrySTDinterwordspacing

\bibitem{asf-datasketches}
\BIBentryALTinterwordspacing
``Apache datasketches.'' [Online]. Available:
  \url{https://datasketches.apache.org/}
\BIBentrySTDinterwordspacing

\bibitem{cummings2018role}
R.~Cummings and D.~Desai, ``The role of differential privacy in gdpr
  compliance,'' in \emph{FAT’18: Proceedings of the Conference on Fairness,
  Accountability, and Transparency}, 2018, p.~20.

\bibitem{ghazi2019power}
B.~Ghazi, N.~Golowich, R.~Kumar, R.~Pagh, and A.~Velingker, ``On the power of
  multiple anonymous messages,'' \emph{arXiv preprint arXiv:1908.11358}, 2019.

\bibitem{corrigan2017prio}
H.~Corrigan-Gibbs and D.~Boneh, ``Prio: Private, robust, and scalable
  computation of aggregate statistics,'' in \emph{14th USENIX symposium on
  networked systems design and implementation (NSDI 17)}, 2017, pp. 259--282.

\bibitem{ghazi2022multiparty}
B.~Ghazi, B.~Kreuter, R.~Kumar, P.~Manurangsi, J.~Peng, E.~Skvortsov, Y.~Wang,
  and C.~Wright, ``Multiparty reach and frequency histogram: Private, secure,
  and practical.'' \emph{Proc. Priv. Enhancing Technol.}, vol. 2022, no.~1, pp.
  373--395, 2022.

\bibitem{bar2002counting}
Z.~Bar-Yossef, T.~Jayram, R.~Kumar, D.~Sivakumar, and L.~Trevisan, ``Counting
  distinct elements in a data stream,'' in \emph{International Workshop on
  Randomization and Approximation Techniques in Computer Science}.\hskip 1em
  plus 0.5em minus 0.4em\relax Springer, 2002, pp. 1--10.

\bibitem{cohen2007summarizing}
E.~Cohen and H.~Kaplan, ``Summarizing data using bottom-k sketches,'' in
  \emph{Proceedings of the twenty-sixth annual ACM symposium on Principles of
  distributed computing}, 2007, pp. 225--234.

\bibitem{dasgupta2015framework}
A.~Dasgupta, K.~Lang, L.~Rhodes, and J.~Thaler, ``A framework for estimating
  stream expression cardinalities,'' \emph{arXiv preprint arXiv:1510.01455},
  2015.

\bibitem{dwork2006differential}
C.~Dwork, ``Differential privacy,'' in \emph{International colloquium on
  automata, languages, and programming}.\hskip 1em plus 0.5em minus 0.4em\relax
  Springer, 2006, pp. 1--12.

\bibitem{dwork2014algorithmic}
C.~Dwork, A.~Roth \emph{et~al.}, ``The algorithmic foundations of differential
  privacy,'' \emph{Foundations and Trends{\textregistered} in Theoretical
  Computer Science}, vol.~9, no. 3--4, pp. 211--407, 2014.

\bibitem{dickens2022order}
\BIBentryALTinterwordspacing
C.~Dickens, J.~Thaler, and D.~Ting, ``Order-invariant cardinality estimators
  are differentially private,'' in \emph{Advances in Neural Information
  Processing Systems}, S.~Koyejo, S.~Mohamed, A.~Agarwal, D.~Belgrave, K.~Cho,
  and A.~Oh, Eds., vol.~35.\hskip 1em plus 0.5em minus 0.4em\relax Curran
  Associates, Inc., 2022, pp. 15\,204--15\,216. [Online]. Available:
  \url{https://proceedings.neurips.cc/paper_files/paper/2022/file/623307df18da128262aaf394cdcfb235-Paper-Conference.pdf}
\BIBentrySTDinterwordspacing

\bibitem{von2019rrtxfm}
S.~N. von Voigt and F.~Tschorsch, ``{RRT}x{FM}: Probabilistic counting for
  differentially private statistics,'' in \emph{Conference on e-Business,
  e-Services and e-Society}.\hskip 1em plus 0.5em minus 0.4em\relax Springer,
  2019, pp. 86--98.

\bibitem{tschorsch2013algorithm}
F.~Tschorsch and B.~Scheuermann, ``An algorithm for privacy-preserving
  distributed user statistics,'' \emph{Computer Networks}, vol.~57, no.~14, pp.
  2775--2787, 2013.

\bibitem{desfontaines2019cardinality}
D.~Desfontaines, A.~Lochbihler, and D.~Basin, ``Cardinality estimators do not
  preserve privacy,'' \emph{Proceedings on Privacy Enhancing Technologies},
  vol. 2019, no.~2, pp. 26--46, 2019.

\bibitem{warner1965randomized}
S.~L. Warner, ``Randomized response: A survey technique for eliminating evasive
  answer bias,'' \emph{Journal of the American Statistical Association},
  vol.~60, no. 309, pp. 63--69, 1965.

\bibitem{hehir2023sketch}
J.~Hehir, D.~Ting, and G.~Cormode, ``Sketch-flip-merge: Mergeable sketches for
  private distinct counting,'' in \emph{International Conference on Machine
  Learning}, 2023.

\bibitem{flajolet1985probabilistic}
P.~Flajolet and G.~N. Martin, ``Probabilistic counting algorithms for data base
  applications,'' \emph{Journal of computer and system sciences}, vol.~31,
  no.~2, pp. 182--209, 1985.

\bibitem{zhao2022differentially}
\BIBentryALTinterwordspacing
F.~Zhao, D.~Qiao, R.~Redberg, D.~Agrawal, A.~El~Abbadi, and Y.-X. Wang,
  ``Differentially private linear sketches: Efficient implementations and
  applications,'' in \emph{Advances in Neural Information Processing Systems},
  S.~Koyejo, S.~Mohamed, A.~Agarwal, D.~Belgrave, K.~Cho, and A.~Oh, Eds.,
  vol.~35.\hskip 1em plus 0.5em minus 0.4em\relax Curran Associates, Inc.,
  2022, pp. 12\,691--12\,704. [Online]. Available:
  \url{https://proceedings.neurips.cc/paper_files/paper/2022/file/525338e0d98401a62950bc7c454eb83d-Paper-Conference.pdf}
\BIBentrySTDinterwordspacing

\bibitem{pagh2022improved}
\BIBentryALTinterwordspacing
R.~Pagh and M.~Thorup, ``Improved utility analysis of private countsketch,'' in
  \emph{Advances in Neural Information Processing Systems}, S.~Koyejo,
  S.~Mohamed, A.~Agarwal, D.~Belgrave, K.~Cho, and A.~Oh, Eds., vol.~35.\hskip
  1em plus 0.5em minus 0.4em\relax Curran Associates, Inc., 2022, pp.
  25\,631--25\,643. [Online]. Available:
  \url{https://proceedings.neurips.cc/paper_files/paper/2022/file/a47f5cdff1469751597d78e803fc590f-Paper-Conference.pdf}
\BIBentrySTDinterwordspacing

\bibitem{mir2011pan}
D.~Mir, S.~Muthukrishnan, A.~Nikolov, and R.~N. Wright, ``Pan-private
  algorithms via statistics on sketches,'' in \emph{Proceedings of the
  thirtieth ACM SIGMOD-SIGACT-SIGART symposium on Principles of database
  systems}, 2011, pp. 37--48.

\bibitem{melis2015efficient}
L.~Melis, G.~Danezis, and E.~De~Cristofaro, ``Efficient private statistics with
  succinct sketches,'' \emph{arXiv preprint arXiv:1508.06110}, 2015.

\bibitem{charikar2002finding}
M.~Charikar, K.~Chen, and M.~Farach-Colton, ``Finding frequent items in data
  streams,'' in \emph{International Colloquium on Automata, Languages, and
  Programming}.\hskip 1em plus 0.5em minus 0.4em\relax Springer, 2002, pp.
  693--703.

\bibitem{cormode2005improved}
G.~Cormode and S.~Muthukrishnan, ``An improved data stream summary: the
  count-min sketch and its applications,'' \emph{Journal of Algorithms},
  vol.~55, no.~1, pp. 58--75, 2005.

\bibitem{beyer2009distinct}
K.~Beyer, R.~Gemulla, P.~J. Haas, B.~Reinwald, and Y.~Sismanis,
  ``Distinct-value synopses for multiset operations,'' \emph{Communications of
  the ACM}, vol.~52, no.~10, pp. 87--95, 2009.

\bibitem{heule2013hyperloglog}
S.~Heule, M.~Nunkesser, and A.~Hall, ``Hyperloglog in practice: Algorithmic
  engineering of a state of the art cardinality estimation algorithm,'' in
  \emph{Proceedings of the 16th International Conference on Extending Database
  Technology}, 2013, pp. 683--692.

\end{thebibliography}

\end{document}